\documentclass{llncs}

\usepackage[numbers]{natbib}
\usepackage[utf8]{inputenc}
\usepackage[english]{babel}
\usepackage{amsfonts}

\usepackage{amsmath,amssymb,amsthm}
\usepackage{algpseudocode}
\usepackage{algorithm}
\usepackage[symbol*]{footmisc}

\newcommand{\RAM}{$\mathit{RAM}$ }
\newcommand{\PRAM}{$\mathit{PRAM}$ }
\newcommand{\TO}{\textbf{\textit{time optimal }}}
\newcommand{\while}{$\textbf{while}$ }

\hyphenation{xxx-yy-zzz}

\newcommand {\bs}[0]{}

\begin{document}
\title{Parallel multiple selection \\ by regular sampling \thanks{This work was supported by the Polish National Science Centre grant DEC-2012/07/B/ST6/01534 } }

\author{Krzysztof Nowicki}
\institute{
Institute of Computer Science\\
University of Wrocław\\
Wrocław, Poland\\
Email: krz.nowicki@gmail.com
}
\maketitle

\begin{abstract}
\par In this paper we present a deterministic parallel algorithm solving the multiple selection problem in congested clique model. In this problem for given set of elements $\bs{S}$ and a set of ranks $\bs{K = \{k_1, k_2, ..., k_r\}}$ we are asking for the $k_i$-th smallest element of $\bs{S}$ for $\bs{1 \leq i \leq r}$.
\par The presented algorithm is deterministic, \TO, and needs $\bs{O(\log^*_{r+1}(n))}$ communication rounds, where $\bs{n}$ is the size of the input set, and $\bs{r}$ is the size of the rank set. This algorithm may be of theoretical interest, as for $\bs{r=1}$ (classic selection problem) it gives an improvement in the asymptotic synchronization cost over previous $\bs{O(\log\log p)}$ communication rounds solution, where $\bs{p}$ is size of clique.
\end{abstract}

\section{Introduction}
Parallel algorithms are one of the answers to the question of what to do with the large amount of data encountered in today's systems. In this paper we will present another deterministic algorithm solving the multiple selection problem designed for parallel architectures similar to synchronous message passing model. In such models, each processor has some private memory and computations is peformed in rounds consisting of alternating phases of local computation and communication.

\subsection{Complexity measurements}
We will analyse two kinds of computational complexity:
\begin{itemize}
\item round complexity
\item time complexity
\end{itemize}
The round complexity is the number of rounds required to finish execution of an algorithm. The time complexity of one round is the minimal number of operations after each comptational unit finishes local computation in this round. The time complexity of an algorithm is simply sum of time complexities over all rounds.

\par We will consider an algorithm solving some given problem \TO if its time complexity is $O(\frac{\mathit{OPT}}{p})$, where $\mathit{OPT}$ is time complexity of an optimal algorithm solving this problem on a sequential \RAM machine.

\subsection{About multiple selection problem}
\par In the multiple selection problem $(S,K)$ for given set of elements $S$ of size $n$ and a set of ranks $K = \{k_1, k_2, ..., k_r\}$ we are asking for the $k_i$-th smallest element of $S$ for $1 \leq i \leq r$.
\par The multiple selection problem has direct application in statistics, more specifically we solve this problem to find quantiles. Apart of that, for $r = n$ it is classic sorting problem and for $r=1$ the selection problem, which are among basic problems of computer science.
\par The multiple selection as generalization of sorting and selection might be also interesting in context of parallel architectures with local memory, from purely theoretical point of view. The sorting problem has a deterministic \TO solution requiring $O(1)$ communication rounds. For selection problem we do not know such solutions nor any non trivial $\omega(1)$ lower bounds on required number of communication rounds. Known algorithms are either randomised, non \TO or require $\omega(1)$ communication rounds.

\section{Previous and related results}
\subsection{\PRAM model}
\par The multiple selection problem in \PRAM architecture has a solution matching given lower bound. Proposed algorithm works in $O(\log n)$ steps on $\frac{n}{\log n}$ processors \cite{optPRAMSelection}. The best known multiple selection algorithm needed $O(\frac{n}{p} \log (r+1))$ steps \cite{OptimalMultiselection}. However, algorithm proposed in that paper seems to work only for $p \in o(\frac{n}{\log n})$, thus for $r=1$, algorithm had optimal cost $O(\frac{n}{p})$ per processor, but required $\omega(\log n)$ steps of computation.

\subsection{BSP model}
\par For $\mathit{BSP}$  model the simple selection problem was investigated more often than the multiple selection version. There are two deterministic solution we would like to mention. The first is more or less parallel implementation of median of medians algorithm working in optimal time and $\min(\log p, \log \log n)$ synchronization rounds, where $p$ is a number of computational units and $n$ is the size of given set \cite{OptimalSelection}.
\par In 2010 Alexander Tiskin proposed a \TO algorithm solving the selection problem using regular sampling \cite{Tiskin2010}, which required $O(\log \log p)$ communication rounds.
\par If we allow randomization, constant communication round solution, given by Alexandros V. Gerbessiotis and Constantinos J. Siniolakis, is known for quite long time \cite{Gerbessiotis:1996:DSR:237502.237561}. 
\par The algorithm presented in this paper, also using regular sampling, needs only $O(\log_{(r+1)}^*n)$ communication rounds, thus for $r=1$ it gives \TO selection algorithm with round complexity $O(\log^*n)$, which is improvement over the previous result.
\section{Congested clique model}
In this paper we will focus on congested clique model. Congested clique is a variant of the congest model, where each two nodes are directly connected. Apart from the bandwidth size parameter and the number of rounds we will also take into account the time complexity of an algorithm, as without this additional limitation, we could simply sort set $S$ and $K$ using a constant round number algorithm \cite{LenzenRouting}, which would allow us pick elements with ranks in $K$ in some constant number of additional rounds.

\par Congested clique is a set of computational units $V$ which from clique - any two nodes can send message to each other. Each node of this clique has unique id from set $\{ 1,2,...,|V|\} $. Sometimes we will use $v$ or 'node' to address some element of set $V$, and we will denote id of $v$ by $v_{\mathit{id}}$. The selection of $k$-th element from set $S$ in this model is defined as follows:
\begin{itemize}
\item each node has up to $\frac{|S|}{|V|} + 1$ elements of $S$ in local memory as input
\item $S$ is not too big, $|S| \in \Theta(|V|^2)$
\item each node knows value of $k$
\item at the end some node must know value of $k$-th element of set $S$
\end{itemize}

\section{Median of median in parallel}
In the \RAM model to solve the selection problem we use the classical median of median algorithm. Basically it exploits the fact, that if we split a set into some pieces of size 5, select a median for each piece and then select a median of those medians, it split the original set into linearly large fractions. In parallel we could do similar thing, however instead of 5-element groups we will use elements stored in local memory of each node $v$ to form one group, thus the number of groups is equal to $|V|$ instead of $\frac{n}{5}$.
\par In \ref{MoM} algorithm and its analysis we will use additional denotations:
\begin{itemize}
  \item $S$ is a variable denoting set of elements, initially input set
  \item $S_v$ is avariable denoting part of $S$, stored in node $v$
  \item $c_i$ are some constant value
  \item $n$ is the size of the input set
  \item $S_i$ is a value of $S$ at the beginning of phase $i$
  \item $S_{(v,i)}$ is a value of $S_v$ at the beginning of phase $i$
\end{itemize}

Also at the beginning of each iteration of while loop (phase), we would like to have preserved the following invariant
$$ \forall{u,v\in V}\ | |S_{(v,i)}| - |S_{(u,i)}| | \leq 1$$

\begin{algorithm}
  \caption{Median of medians}
  \label{MoM}
  \begin{algorithmic}[1]
  \While{$|S|>c_0$}
    \State \textbf{each unit $v$ selects $y_v$ median of $S_v$}
    \State \textbf{each unit broadcasts it to other units}
    \State \textbf{each unit calculates $y$, median of received medians}
    \State \textbf{each unit $v$ calculates $x_v$, number of local elements smaller than $y$}
    \State \textbf{each unit broadcasts $x_v$ to other units}
    \State \textbf{each unit calculates $x$, the number of smaller than $y$ elements in $S$}
    \If{$k > x$}
    \State \textbf{for each unit $v$,  $S_v \gets \{z\in S_v\,|\,z>y\}$}
    \State \textbf{$k \gets k-x$}
    \Else
    \State \textbf{for each unit $v$,  $S_v \gets \{z\in S_v\,|\,z\leq y\}$}
    \EndIf
    \State \textbf{redistribute elements to ensure, that for each $v$, $S_v$ are of the same size}
    \State
    \Comment with accuracy of additive constant
  \EndWhile
  \end{algorithmic}
\end{algorithm}
\subsection{Number of rounds}
\par In $i$-th phase $y$ provides division of $S_i$ such that in $\{z\in S_i\,|\,z>y\}$ and in $\{z\in S_i\,|\,z\leq y\}$ we have at least $\frac{|S_i|}{4}$ elements. Thus, in each of them we have no more than  $\frac{3|S_i|}{4}$. Therefore, number of rounds, for $m$ elements is expressed as: $$R(m) \leq R\Big(\frac{3m}{4}\Big) + c_1$$ for some constant $c_1$. Thus, we can limit the size of $S_i$.
\begin{lemma}
\label{reductionlemma}
In $i$-th phase of \label{MoM} algorithm, we have subproblem of size $|S_i| \in O\big(n(\frac{3}{4})^k\big)$.
\end{lemma}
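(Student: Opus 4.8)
The plan is to read the lemma as the iterated form of the one-step size reduction derived just before it, and to prove it by induction on the phase index $i$ (the exponent in the statement is the phase number, matching the ``$i$-th phase'' phrasing). The argument preceding the statement shows that the median of medians $y$ leaves at least $\tfrac{|S_i|}{4}$ elements on each side of the split, so the side retained in phase $i$ contains at most $\tfrac{3|S_i|}{4}$ of them; that is, $|S_{i+1}| \le \tfrac{3}{4}\,|S_i|$. Granting this contraction, the lemma is exactly its unrolling.

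Concretely, I would set the base case at the initial phase, where $S$ is still the whole input and $|S_0| = n = n\bigl(\tfrac{3}{4}\bigr)^{0}$. For the inductive step I assume $|S_i| \le n\bigl(\tfrac{3}{4}\bigr)^{i}$ and apply the one-step contraction, obtaining $|S_{i+1}| \le \tfrac{3}{4}|S_i| \le n\bigl(\tfrac{3}{4}\bigr)^{i+1}$, which closes the induction and gives $|S_i| \in O\bigl(n(\tfrac{3}{4})^{i}\bigr)$. Shifting the phase numbering by one only changes the hidden constant from $1$ to $\tfrac{4}{3}$, so it does not affect the asymptotic claim.

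The step that in fact deserves care is the justification of the clean factor $\tfrac{3}{4}$ itself, since the invariant $\forall u,v\ \bigl||S_{(v,i)}| - |S_{(u,i)}|\bigr| \le 1$ guarantees only approximate equality of the groups, and the redistribution on the final line of Algorithm \ref{MoM} is accurate merely up to an additive constant. Propagating these rounding effects through the median-of-medians counting (half of the $|V|$ group medians lie on one side of $y$, and each such group contributes at least half of its $\approx |S_i|/|V|$ elements) produces a recurrence of the shape $|S_{i+1}| \le \tfrac{3}{4}|S_i| + O(|V|)$ rather than a purely geometric one. I would close this gap by noting that the $O(|V|)$ term is dominated as long as $|S_i| = \omega(|V|)$, which is precisely the regime the loop operates in given $|S| \in \Theta(|V|^2)$; once $|S_i|$ has shrunk to $O(|V|)$ the guard $|S| > c_0$ is essentially at its threshold and the residual instance is finished off directly, so the additive slack never disturbs the geometric decay asserted by the lemma.
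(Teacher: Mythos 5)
Your proposal is correct and follows essentially the same route as the paper, which derives the lemma from the one-step bound $|S_{i+1}|\leq \tfrac{3}{4}|S_i|$ (via the median-of-medians count giving at least $\tfrac{|S_i|}{4}$ elements on each side of $y$) and then unrolls the recurrence $R(m)\leq R(\tfrac{3m}{4})+c_1$; you correctly read the exponent $k$ in the statement as the phase index $i$. Your additional care about the $O(|V|)$ additive slack from the evenness invariant and redistribution is a welcome refinement the paper glosses over, but it does not change the argument's structure.
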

\par As a simple consequence we have, that $S_{c_2 \log n} \in O(1)$ for some constant $c_2$, therefore Algorithm \ref{MoM} requires $O(\log n)$ communication rounds.
\subsection{Redistribution of elements}
\par To preserve our invariant concerning even distribution of $S$ among nodes, we may have to redistribute elements of $S$. It can be done in two communication rounds. In the first each node announces to all other nodes value $x_v$, the number of elements $S_v$. Let $y_v = \sum\limits_{u \ | \ u_{\mathit{id}}<v_{\mathit{id}}} x_u$. Let assume, that node $v$ has in local memory set of elements $\{a_1, a_2, ..., a_{x_v}\}$. In the second round each node $v$ sends element $a_i$, to node with id $(i + y_v) \mod p$. Each node has no more than $|V|$ numbers, therefore  there would be at most one outgoing message per link. Also, whole set would be partitioned into subsets satisfying our invariant.
\subsection{Time}
\par Computation time $T(i)$ of the $i$-th phase is $O(\max_v |S_{(v,i)}|)$. Together with invariant that $S$ is evenly distributed among nodes and Lemma \ref{reductionlemma} gives us $T(i) \in O\big(\frac{|S_i|}{p}\big) = O\big(\frac{n}{p}(\frac{3}{4})^i\big)$. Thus overall time complexity is equal to $$\sum\limits_{i=1}^{R(n)} c_3\frac{|S_i|}{p} = \sum\limits_{i=1}^{R(n)} c_3\frac{n}{p}\Big(\frac{3}{4}\Big)^i \in O\bigg(\frac{n}{p}\bigg)$$
\par Summarizing, this algorithm provides optimal time, but require quite large number of communication rounds.

\section{Further parallelization}
In 2010 Alexander Tiskin proposed \cite{Tiskin2010} algorithm exploiting fact, that in the later phases of computation we could select regular sample instead of median in each node. 
\par A regular sample of size $h$ for set $S$ is a subset $S'$ of size $h$ with such property: if we consider $S$ and $S'$ in order, then between each two consecutive elements of $S'$ there are $O(\frac{|S|}{h})$ elements of set $S$. 
\par If we have a regular sample $H$ we can calculate rank of each element of $H$. Therefore, we can find two consecutive elemnts $x,y \in H$ such that $\mathit{rank}(x)\leq k \leq \mathit{rank}(y)$ thus we can reduce our problem to $(x,y) \cap S$ with $k' = k-rank(x)$. With sufficiently large regular sample, the size of set $S$ decreases significantly faster.
\par Alexander Tiskin proposed such sample sizes, that required number of communication rounds of his algorithm is $O(\log \log p)$,  still maintaining optimal time complexity $O(\frac{|S|}{p})$. In our result we describe how to select the size of samples slightly better to get an algorithm with even smaller number of rounds. Also, we analyze trade-off between time and round complexity. However, in the first place we must calculate how large regular sample can we compute in a constant number of rounds and time $O(\frac{n}{p})$.
\subsection{Distributed selection of regular sample}

\begin{lemma}
Let $H_v$ be a sample of set $S_v$ of size $h$.
\label{sampleSelection}
If we want to select sample $H$ of size $h$ for whole set $S = \bigcup\limits_v{S_v}$, we can select sample $H_v$ of size $h$ for each of sets $S_v$ and then select sample of size $h$ of set $H' = \bigcup\limits_v{H_{v}}$.
\end{lemma}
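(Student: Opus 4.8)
The plan is to verify the defining property of a regular sample directly. Write $p=|V|$ and recall the balance invariant $\big||S_u|-|S_v|\big|\le 1$, which together with $|S|\in\Theta(p^2)$ gives $|S_v|=O(|S|/p)$ for every node $v$. Let $H$ be the size-$h$ regular sample chosen from $H'=\bigcup_v H_v$, where $|H'|=ph$. To show $H$ is a regular sample of $S$ it suffices to take two elements $x<y$ that are consecutive in the sorted order of $H$ and prove that $\big|S\cap(x,y)\big|=O(|S|/h)$.

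First I would establish a local estimate valid for an arbitrary interval $I$. Fix a node $v$. By regularity of $H_v$ as a sample of $S_v$, the $h$ points of $H_v$ cut $S_v$ into slabs each containing $O(|S_v|/h)$ elements of $S_v$. If $I$ contains exactly $m_v$ points of $H_v$, then $S_v\cap I$ is covered by at most $m_v+1$ consecutive slabs (the $m_v-1$ slabs strictly between the sample points lying in $I$, plus the two slabs straddling the endpoints of $I$), so $\big|S_v\cap I\big|=O\big((m_v+1)\,|S_v|/h\big)$.

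Summing this over all nodes and using $|S_v|=O(|S|/p)$ gives $\big|S\cap I\big|=O\!\big(\tfrac{|S|}{ph}\big)\big(\sum_v m_v+p\big)$. Here $\sum_v m_v=\big|H'\cap I\big|$ by construction of $H'$. Specializing to $I=(x,y)$ with $x,y$ consecutive in $H$ and invoking regularity of $H$ as a sample of $H'$, we get $\big|H'\cap(x,y)\big|=O(|H'|/h)=O(ph/h)=O(p)$; hence $\sum_v m_v+p=O(p)$ and $\big|S\cap(x,y)\big|=O\!\big(\tfrac{|S|}{ph}\big)\cdot O(p)=O(|S|/h)$, which is exactly the regularity bound for $S$.

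I expect the main obstacle to be the local estimate: the slab bookkeeping must handle the two boundary slabs of $H_v$ (before the first and after the last sampled element) as well as intervals $I$ containing few or no sample points, for which I would take each $H_v$ to be a genuinely equally spaced sample so that all $h+1$ slabs, including the end ones, carry $O(|S_v|/h)$ elements. The only other point worth checking is that the additive $p$ produced by summing the $+1$ terms does not dominate, which holds precisely because $\sum_v m_v$ is itself $O(p)$ for the intervals of interest.
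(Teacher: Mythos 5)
Your argument is correct and is essentially the paper's own proof: both establish the local estimate $|S_v\cap(x,y)|=O\big((m_v+1)\tfrac{|S|}{ph}\big)$ from the regularity of $H_v$, sum over nodes, and use the regularity of $H$ with respect to $H'$ to bound $\sum_v m_v=|H'\cap(x,y)|=O(p)$, yielding $O(|S|/h)$ in total. The only cosmetic difference is that the paper splits nodes into those with and without a point of $H_v$ in the interval (charging $2c_1\tfrac{|S|}{ph}$ per sample point in the first case), whereas you keep the sum $\sum_v(m_v+1)$ intact, which is arguably cleaner.
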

\begin{proof}
To prove this theorem we will give some upper bound on number of elements of set $S$ between two consecutive elements $x,y \in H$.
\par Let us focus for a while on one node $v$. Between two consecutive elements of $H_v$ there are at most $c_1\frac{|S|}{ph}$ elements of $S_v$. For $m$ elements of $H_v$ in interval $(x,y)$ we have at most $(m+1)c_1\frac{|S|}{ph}$ elements of $S_v$ in this interval. Note, that it is correct also for $m=0$. 
\par Thus, if some element of $H_v$ is in interval $(x,y)$ it contributes no more than $2c_1\frac{|S|}{ph}$ elements of $S_v$ to this interval. If there are no elements of $H_v$ in interval $(x,y)$, $S_v$ has no more than $c_1\frac{|S|}{ph}$ elements in interval $(x,y)$.
\par By definition we know, that for some $c$ there are at most $c\frac{ph}{h} = cp$ elements of $H'$ in interval $(x,y)$. 
\par Thus, we have at most $cp (2c_1\frac{|S|}{ph})$ elements from elements of $H'$ falling into $(x,y)$ and at most $pc_1\frac{|S|}{ph}$ elements from nodes $v$ such that none of $H_v$ is in $(x,y)$. There are at most $p$ such nodes, each contributes $c_1\frac{|S|}{ph}$ to this interval. Together it gives us no more than $(2cc_1 + c_1) (p\frac{|S|}{ph}) \in O(\frac{|S|}{h})$ elements in interval $(x,y)$. Thus, if we want to select regular sample for set $S$ we may select $H_v$ for each of sets $S_v$, send them to one node and there select regular sample of $H'$. Therefore Lemma \ref{sampleSelection} is correct. 
\end{proof}
\par If we have a set of elements of size $n$ and we want to select $h$ elements of known ranks we need \cite{optMSelection} $\Omega(n \log h)$ operations. Also with slightly modified quick sort algorithm it is possible to select those elements in time $O(n \log h)$. Thus $\Theta(n \log h)$ time is sufficient and required to perform local sample selection. 
\par If we solve equation $\frac{|S_i|}{p} \log h = \frac{n}{p}$ for $h$, we will get $h = 2^{(\frac{n}{|S_i|})}$. Thus, in $i$-th round we can select local sample of size $2^{(\frac{n}{|S_i|})}$ and we will still hold our limit of $O\big(\frac{n}{p}\big)$ time for each node in each round.

\subsection{Inefficient solution}
Presented algorithm in each \while loop iteration reduces the size of problem. It is done, as mentioned before, by calculating ranks of elements in regular sample and finding interval containing $k$-th element. Such interval will be called \textit{active}.

\begin{algorithm}
  \caption{Selection by regular sampling}
  \label{select}
  \begin{algorithmic}[1]
  \While{$|S|>c\frac{n}{log n}$}
    \State \textbf{each unit $v$ selects regular sample $Y_v$ of size  $2^{(\frac{n}{|S|})}$ from $S_v$}
    \State \textbf{sort $Y'=\bigcup\limits_{v\in V} Y_v$ using distributed algorithm}
    \State
    \Comment sorting requires constant number of rounds \cite{LenzenRouting}
    \State \textbf{calculate $Y$, regular sample for set $Y'$ of size $2^{(\frac{n}{|S|})}$}
    \State
    \Comment which is also regular sample of $S$
    \State \textbf{each unit with some element from $Y$ in local memory broadcast it to all other nodes}
    \State
    \Comment all nodes have Y in local memory
    \State \textbf{each unit $v$ calculates values $x_{(v,j}$, number of $S_v$ elements in $j$-th interval induced by $Y$}
    \State 
    \Comment by intervals induced by $Y$ we understand intervals with bounds in two consecutive elements of $Y$
    \State \textbf{each unit $v$ sends $x_{(v,j)}$ to node with $id = j$, for each $j$ }
    \State \textbf{each node aggregates received values and broadcasts sum it to all nodes}
    \State 
    \Comment each unit knows $X$, the set numbers of elements of $S$ in each interval induced by $Y$
    \State \textbf{each unit calculates active interval $\langle x,y \rangle$}
    \State \textbf{for each unit $v$ $S_v \gets \{z \,|\,z\in S_v \wedge x \leq z < y \}$}
    \State \textbf{recalculate value of $k$ with respect to active interval}
    \State \textbf{redistribute elements to ensure, that for each $v$ $S_v$ are of the same size}
    \State
    \Comment with accuracy of additive constant
  \EndWhile
  \State sort $S$
  \State
  \Comment again - sorting requires constant number of rounds \cite{LenzenRouting}
  \end{algorithmic}
  
\end{algorithm}

\subsubsection{Analysis of Algorithm \ref{select}}
\label{analysis}
\begin{lemma}
\label{sampleSizeLemma}
Size of sample is $O(\frac{n}{p \log n})$ 
\end{lemma}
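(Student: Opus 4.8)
The plan is to bound the prescribed sample size $h = 2^{(n/|S|)}$ (used in lines~2 and~4 of Algorithm~\ref{select}, both for the local samples $Y_v$ and for the global sample $Y$) by exploiting the guard of the \while loop. Every iteration is executed only while $|S| > c\frac{n}{\log n}$, so throughout the loop $\frac{n}{|S|} < \frac{\log n}{c}$; since $h$ is increasing in $n/|S|$, the largest sample is the one produced in the final iteration, when $|S|$ is closest to $\Theta(\frac{n}{\log n})$.

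First I would substitute this into the exponent. Reading $\log$ as $\log_2$ (any other base merely rescales the constant $c$),
\[
h = 2^{(n/|S|)} < 2^{(\log n)/c} = n^{1/c},
\]
so it suffices to verify $n^{1/c} \in O\big(\tfrac{n}{p\log n}\big)$ for the constant $c$ fixed in the loop guard.

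Next I would invoke the density assumption of the model, $|S| \in \Theta(|V|^2)$ at the start, i.e. $n \in \Theta(p^2)$, equivalently $p \in \Theta(\sqrt n)$ and $\log n \in \Theta(\log p)$. This rewrites the target as $\frac{n}{p\log n} \in \Theta\big(\tfrac{\sqrt n}{\log n}\big)$, reducing the claim to $n^{1/c} \in O\big(\tfrac{\sqrt n}{\log n}\big)$. Fixing the loop constant strictly above $2$ (for instance $c = 3$) gives $n^{1/c}\log n \in o(\sqrt n)$, which is exactly the inequality needed, and the lemma follows.

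The calculation itself is routine; the one substantive move is noticing that the loop guard in isolation delivers only the bound $n^{1/c}$, and that the sharper stated form $O(\frac{n}{p\log n})$ is recovered purely through the model's quadratic density $n \in \Theta(p^2)$. The only place demanding attention is the choice $c > 2$: the polynomial gap between $n^{1/c}$ and $n^{1/2}$ must be wide enough to swallow the extra $\log n$ in the denominator, so any $c \le 2$ would break the argument.
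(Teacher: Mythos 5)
Your proof is correct and is essentially the contrapositive of the paper's own argument: the paper assumes the sample exceeds $\frac{n}{p\log n}$ and derives $|S_i| < \frac{n}{\log(n/(p\log n))} < c\frac{n}{\log n}$, contradicting the loop guard, while you run the same computation forward from the guard and compare against $\frac{n}{p\log n} \in \Theta(\sqrt n/\log n)$ using $n \in \Theta(p^2)$. A small bonus of your write-up is that it makes explicit the requirement $c>2$ (so that $n^{1/c}\log n \in o(\sqrt n)$), a condition on the loop-guard constant that the paper's inequality silently relies on as well.
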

\begin{proof}
By contradiction. If sample would be of size $\omega(\frac{n}{p \log n})$, that would mean $2^{(\frac{n}{|S_i|})} > \frac{n}{p \log n}$. Thus $|S_i|< \frac{n }{\log(\frac{n}{p \log n})} < c \frac{n }{\log n}$, thus $S_i$ would be to small to allow another execution of \while loop.
Therefore, Lemma \ref{sampleSizeLemma} is correct.
\end{proof}
\begin{lemma}
\label{linearTimeLemma}
Each iteration of \while loop takes $O(\frac{n}{p})$ time.
\end{lemma}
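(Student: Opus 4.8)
The plan is to bound the local computation time of each of the constant number of lines in one iteration of the \while loop of Algorithm \ref{select} by $O(\frac{n}{p})$ and then sum over the $O(1)$ lines. Three facts will drive every estimate: the even-distribution invariant, which gives $|S_{(v,i)}| \in O(\frac{|S_i|}{p})$ for every $v$; the defining choice $h = 2^{(n/|S_i|)}$ of the sample size, whose sole purpose is that $\log h = \frac{n}{|S_i|}$; and Lemma \ref{sampleSizeLemma}, which yields $h \in O(\frac{n}{p\log n})$ and hence $|Y'| = ph \in O(\frac{n}{\log n})$. I will also use the model assumption $n \in \Theta(p^2)$, under which $O(p) \subseteq O(\frac{n}{p})$ and $h \in O(\frac{p}{\log n}) \subseteq O(p)$, so that the interval index $j$ can indeed be assigned to a distinct node.

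The two lines whose cost is a priori most sensitive are the local selection of $Y_v$ and the per-interval counting producing the $x_{(v,j)}$. For the first, the $\Theta(m\log h)$ cost of selecting a size-$h$ regular sample from $m$ elements, applied with $m = |S_{(v,i)}| \in O(\frac{|S_i|}{p})$, gives $O(\frac{|S_i|}{p}\log h) = O(\frac{|S_i|}{p}\cdot\frac{n}{|S_i|}) = O(\frac{n}{p})$; this telescoping is exactly why $h$ was set to $2^{(n/|S_i|)}$. For the second, binary-searching each of the $O(\frac{|S_i|}{p})$ local elements into the sorted broadcast copy of $Y$ costs $O(\log h)$ each, so again $O(\frac{|S_i|}{p}\log h) = O(\frac{n}{p})$ in total.

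The remaining lines each touch at most $\max(|S_{(v,i)}|, \frac{|Y'|}{p}, h, p)$ items. Sorting $Y'$ distributes $O(\frac{|Y'|\log|Y'|}{p})$ work per node, and since $|Y'|\log|Y'| \in O(\frac{n}{\log n}\cdot\log n) = O(n)$ this is $O(\frac{n}{p})$. Extracting the size-$h$ regular sample $Y$ from the already-sorted $Y'$ only requires each node to retain its evenly-spaced entries, at cost $O(\frac{|Y'|}{p})$. The broadcasts of $Y$ and of the interval sums, together with the prefix-sum search for the active interval, each process $O(h) \subseteq O(\frac{n}{p})$ values; the aggregation at node $j$ sums one contribution per node, i.e. $O(p) = O(\frac{n}{p})$ work; and the final filtering of $S_v$, the update of $k$, and the redistribution each run in $O(|S_{(v,i)}|) = O(\frac{n}{p})$ time.

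The main obstacle is not any individual estimate but ensuring that the two ``$\log h$'' lines and the sorting line collapse to $O(\frac{n}{p})$ at the same time: the first two hinge on the identity $\log h = \frac{n}{|S_i|}$ combined with the invariant, while the sorting line hinges on Lemma \ref{sampleSizeLemma} keeping $|Y'|$ within $O(\frac{n}{\log n})$. Once these are settled, every other line is a routine $O(\max(h,p,|S_{(v,i)}|))$ bound that stays within budget precisely because $n \in \Theta(p^2)$.
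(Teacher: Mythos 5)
Your proposal is correct and follows essentially the same decomposition as the paper's proof: the local sample selection and the per-element binary search both collapse to $O(\frac{n}{p})$ via the identity $\log h = \frac{n}{|S_i|}$ together with the even-distribution invariant, and the sorting step is bounded through Lemma \ref{sampleSizeLemma} giving $|Y'| \in O(\frac{n}{\log n})$. The only difference is that you explicitly verify the remaining lines (broadcasts, aggregation, filtering, redistribution) using $n \in \Theta(p^2)$, where the paper simply asserts they are ``plainly executable'' in $O(\frac{n}{p})$ time.
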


\begin{proof}
\par \ 
\begin{itemize}
\item  we used the size of sample resulting in exactly $\Theta(\frac{n}{p})$ required time for local sample selection
\item calculating number of elements of $S_{(v,i)}$ in each interval induced by $Y$ require up to $\Theta({\frac{|S_i|}{p}\log 2^\frac{n}{S_i}}) = \Theta(\frac{n}{p})$ time, as we must execute binary search on sample of size $2^\frac{n}{|S_i|}$ for $|S_{(v,i)}|$ elements. 
\item sorting $Y'$ takes $(\frac{n}{p})$, as by Lemma \ref{sampleSizeLemma} one local sample is of size $O(\frac{n}{p \log n})$, thus $|Y'| \in O(\frac{n}{\log n})$
\item other operations plainly are executable in $O(\frac{n}{p})$ time.
\end{itemize}

Thus every operation inside \while loop iteration takes $O(\frac{n}{p})$ time and Lemma \ref{linearTimeLemma} is correct.
\end{proof}

Now let us calculate the number of rounds of this algorithm. In the $i$-th round we select sample of size $2^{(\frac{n}{|S_i|})}$. We can locally select such sample thus, by Lemma \ref{sampleSelection} we can do it for whole $S_i$. To make analysis slightly easier, let us define sequence $g_i = \frac{n}{|S_i|}$. Now we must find $k$, such that $g_k \in \Theta(n)$. Such $k$ would imply, that $|S_k|$ is $O(1)$, thus we reduced problem to a subproblem of the constant size and it is possible to solve it locally on one machine.
\par As we mentioned in $i$-th phase we select regular sample of size $2^{g_i}$. Thus $|S_{i+1}| \in \Theta({\frac{|S_i|}{2^{g_i}}})  \Rightarrow  2^{g_i} \in O(\frac{|S_i|}{|S_{i+1}|}) = O(\frac{n}{g_i|S_{i+1}|}) = O(\frac{g_{i+1}}{g_i}) \Rightarrow 2^{g_i} \leq g_{i+1}$, as $g_i$ is larger than $c$ behind big $O$ notation. Let us consider another sequence $f_i$ defined as follows:
$$f_0 = 1;\,f_i=2^{f_{i-1}} \text{ for } i>0$$
Sequence $g_i$ grows not slower than $f_i$, thus if we would have $f_k > n$ that would imply $g_k > n$. This simple analysis shows us, that with sample size for $i$-th phase equal to $2^{g_i}$ we would have $|S_{\log^*n}| \in O(1)$.

\par Summarizing, such sample sizes give us deterministic $O(\log^*n)$ round algorithm. With linear time complexity per round \ref{linearTimeLemma} this algorithm unfortunately requires $\Theta(\frac{n\log^*n}{p})$ time, thus it is not an  algorithm. 

\subsection{Preprocessing and \TO solution}
There are two possible ways to improve presented algorithm. We can either modify slightly function $g_i$ to get linear time solution or provide preprocessing reducing the size of problem to $\frac{n}{\log* n}$. First solution requires analyzing some quite complicated equations, so we will present second one - provide preprocessing and use presented inefficient algorithm as a black-box.
\begin{theorem}
\label{selectionTheorem}
It is possible to solve the selection problem in $O(\log^* n)$ communication rounds with deterministic, \TO algorithm.
\end{theorem}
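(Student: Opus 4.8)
The plan is to use the inefficient Algorithm \ref{select} as a black box, after first shrinking the instance by a preprocessing phase so that the extra $\log^* n$ factor in the running time of Algorithm \ref{select} is absorbed. Concretely, I would first reduce the input set $S$ from size $n$ to size $m = \frac{n}{\log^* n}$, making sure this costs only $O(\frac{n}{p})$ time and $O(\log^* n)$ rounds, and then run Algorithm \ref{select} on the resulting instance of size $m$. Since Algorithm \ref{select} applied to an instance of size $m$ uses $O(\log^* m)$ rounds (its round bound from Section \ref{analysis}), each costing $O(\frac{m}{p})$ time (Lemma \ref{linearTimeLemma}), and since $\log^* m \leq \log^* n$, the black-box phase costs $O(\log^* m \cdot \frac{m}{p}) = O(\log^* n \cdot \frac{n}{p \log^* n}) = O(\frac{n}{p})$ time and $O(\log^* n)$ rounds.

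For the preprocessing I would run the parallel median-of-medians routine (Algorithm \ref{MoM}) for just enough phases to bring the problem size below $m$. By Lemma \ref{reductionlemma} each phase shrinks the subproblem by a constant factor $\frac{3}{4}$, so it suffices to run $t = O(\log \log^* n)$ phases to reach size $\frac{n}{\log^* n}$; since $\log \log^* n = o(\log^* n)$, this stays within the round budget, and each phase uses $O(1)$ communication rounds. The total time of these phases is $\sum_{i=1}^{t} O(\frac{n}{p} (\frac{3}{4})^i)$, a geometric sum bounded by $O(\frac{n}{p})$, exactly as in the time analysis of Algorithm \ref{MoM}. Crucially, Algorithm \ref{MoM} already maintains the even-distribution invariant and the correct running value of $k$, so at the end of preprocessing the reduced instance is a legitimate selection instance satisfying all the assumptions Algorithm \ref{select} requires. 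Combining the two phases then gives $O(\log^* n) + O(\log\log^* n) = O(\log^* n)$ communication rounds and $O(\frac{n}{p}) + O(\frac{n}{p}) = O(\frac{n}{p})$ time, so the composite algorithm is deterministic and \TO.

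The step I expect to be the main obstacle is making the bookkeeping of the reduction precise: one must verify that after preprocessing the new instance really can be fed to Algorithm \ref{select} with $m$ (not $n$) playing the role of the input size in the sample-size formula $2^{n/|S|}$, so that the per-round time is $O(\frac{m}{p})$ rather than $O(\frac{n}{p})$ — this rescaling is exactly what renders the $\log^* n$ factor harmless. A secondary subtlety is checking that the shrunken instance still satisfies the size constraint $|S| \in \Theta(|V|^2)$ closely enough for the constant-round sorting subroutine and the redistribution step to behave as analysed; this is where I would be most careful, since time-optimality hinges on the reduced instance being handled with cost proportional to $m$ and not to the original $n$.
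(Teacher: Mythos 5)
Your proposal is correct and follows exactly the paper's own argument: run the median-of-medians preprocessing (Algorithm \ref{MoM}) for $\Theta(\log\log^* n)$ phases to shrink $S$ to size $O(\frac{n}{\log^* n})$ in $O(\frac{n}{p})$ time, then invoke the inefficient Algorithm \ref{select} as a black box, whose $O(\log^*\frac{n}{\log^* n} \cdot \frac{n}{p\log^* n}) = O(\frac{n}{p})$ cost absorbs the extra $\log^*$ factor. The only difference is that you flag the rescaling and redistribution subtleties explicitly, which the paper leaves implicit.
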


\begin{proof} Sufficient algorithm for our preprocessing was almost presented at the beginning \ref{MoM}. Only difference is, that we run our \while loop $\Theta(\log \log^* n)$ times. By Lemma \ref{reductionlemma}, we will decrease the size of $S$ to required $O(\frac{n}{\log^*n})$. This preprocessing will requires $\Theta(\log \log^* n) \subseteq O(\log^*n)$ communication rounds and $O(\frac{n}{p})$ time. Inefficient black-box algorithm requires $O(\log^* \frac{n}{\log^*n}) \subseteq O(\log^* n)$ communication rounds and $O(\frac{\frac{n}{\log^*n}\log^*(\frac{n}{\log^*n})}{p}) \subseteq O(\frac{n}{p})$ time. As both part require $O(\log^* n)$ communication rounds and $O(\frac{n}{p})$ time, whole algorithm does, thus Theorem \ref{selectionTheorem} is correct.
\end{proof}
\subsection{Time - number of rounds - trade-off}
Provided algorithm is \TO but unfortunately does not work in constant round number. It seems to be natural to calculate how inefficient must be our algorithm to perform calculations in some constant number of steps. Let us denote by $T_A$ time required by algorithm $A$ to perform selection.
Now we can define $\phi_A$, an inefficiency factor of algorithm $A$ as $\phi = \frac{T_A}{\frac{n}{p}} \Rightarrow \phi n = pT_A$. Now let us repeat whole analysis from \ref{analysis} with limit for our work set to $pT_A$ instead of $n$. 
In each round we can select sample of size $2^{\frac{pT_A}{|S_i|}} = 2^{\phi \frac{n}{|S_i|}} = ({2^\phi})^\frac{n}{|S_i|}$. Thus, function $g_i$ is larger than $f'_i$ defined as follows: $$f'_0=1; f'_i = (2^\phi)^{f'_{i-1}} \text{ for } i>0$$ This lower bound on value of $g_i$ give us round complexity $O(\log^*_{2^\phi} n)$. Therefore if we choose an inefficiency factor $\phi$ such that $log^*_{2^\phi} n \in O(1)$ we will get a constant round number solution.

\section{Multiple selection by regular sampling}
\subsection{Motivation}
For the sorting problem we know a deterministic \TO constant round number solution, but for the selection problem we do not. Both problems are actually edge cases of more general multiple selection problem. In such problem for a given set $S$ and set of ranks $K$ we want to know elements of $S$ with rank present in set $K$. For $|K| = 1$ it becomes the selection problem and for $|K|= |S|$ it becomes the sorting problem. It seems to be interesting, what happens for $1<|K|<n$. 

\subsection{Problem}
In multiple selection problem we have a set of ranks $K$ instead of single rank $k$. We assume that set $K$ is small, as if $|K| > \frac{|V|}{p}$ then we could simply sort $S$ and $K$ \cite{LenzenRouting} and select from $S$ elements with rank in $K$ and that would be constant communication round, \TO solution. As $K$ is small, we can also assume, that each node has it in local memory.

\subsection{Algorithm}
Algorithm \ref{mselect} is similar to the algorithm for selection by regular sampling presented before. Main differences are, that we have more than one active interval and we are allowed to work for $O(\frac{n}{p} \log|K|)$ time.
\subsection{Preprocessing}
This causes some changes in both preprocessing and main part of the algorithm. As for preprocessing part, instead using median of medians, we must use a regular sample of size $\Theta(|K|^2)$. We may have up to $|K|$ active intervals, and we want decrease the size of problem $\Theta(|K|)$-times in one phase of preprocessing, thus the sample must have size $\Theta(|K|^2)$.
\par As for the main part of algorithm, we are allowed to select slightly larger regular samples still having \TO algorithm.
\begin{algorithm}
  \caption{Multiple selection by regular sampling}
  \label{mselect}
  \begin{algorithmic}[1]
  \While{$|S|>c\frac{n}{\log n}$}
    \State \textbf{each unit $v$ selects regular sample $Y_v$ of size  $(|K|+1)^{(\frac{n}{|S|})}$ from $S_v$}
    \State \textbf{sort $Y'=\bigcup\limits_{v\in V} Y_v$ using distributed algorithm}
    \State
    \Comment sorting requires constant number of rounds \cite{LenzenRouting}
    \State \textbf{calculate $Y$, regular sample for set $Y'$ of size $2^{(\frac{n}{|S|})}$}
    \State
    \Comment which is also regular sample of $S$
    \State \textbf{each unit with some element from $Y$ in local memory broadcast it to all other nodes}
    \State
    \Comment all nodes have $Y$ in local memory
    \State \textbf{each unit $v$ calculates values $x_{(v,j)}$, number of $S_v$ elements in $j$-th interval induced by $Y$}
    \State 
    \Comment by intervals induced by $Y$ we understand intervals with bounds in two consecutive elements of $Y$
    \State \textbf{each unit $v$ sends $x_{(v,j}$ to node with $id = j$, for each $j$ }
    \State \textbf{each node aggregates received values and broadcasts sum it to all nodes}
    \State 
    \Comment each unit knows $X$, the set numbers of elements of $S$ in each interval induced by $Y$
    \State \textbf{each unit calculates the set of active intervals $I$}
    \State \textbf{for each unit $v$ $S_v \gets \{z \,|\,z\in S_v \wedge \exists \langle x,y \rangle \in I . x \leq z < y \}$}
    \State \textbf{recalculate values of $K$ with respect to active intervals}
    \State \textbf{redistribute elements to ensure, that for each $v$ $S_v$ are of the same size}
    \State
    \Comment with accuracy of additive constant
  \EndWhile
  \State sort $S$
  \State find elements with rank in $K$
  \State
  \Comment again - sorting requires constant number of rounds \cite{LenzenRouting}
  \end{algorithmic}
\end{algorithm}

\subsection{Analysis of Algorithm \ref{mselect}}
Analysis is quite similar as for \ref{select} Selection by regular sampling algorithm. Difference is, that we are allowed to select sample of size $2^{(\frac{n}{S_i} \log (|K|+1))} = (|K|+1)^{\frac{n}{|S_i|}}$, as optimal sequential time is $\Theta(n \log|K|)$ \cite{optMSelection}. Also, instead of one active interval we have up to $|K|$. Therefore, our dependency between $g_{i}$ and $g_{i+1}$ is slightly different.
$$|S_{i+1}| \in O\bigg(|K|\frac{|S_i|}{(|K|+1)^{\frac{n}{|S_i|}}}\bigg)$$ 
$$(|K|+1)^{({g_i}-1)} \in O\Big(\frac{g_{i+1}}{g_i}\Big)$$
$$(|K|+1)^{({g_i}-1)} \leq g_{i+1}$$
and is limited from below by $f_i$ defined as follows.
$$
f_0 = |K|+1 ;f_i = (|K|+1)^{(f_{i-1}-1)} \text{ for } i>0
$$
Therefore $|S_{c \log^*_{(|K|+1)} n}| \in O(1)$ for some $c$, which means our algorithm requires  $O(\log^*_{|K|+1} n)$ communication rounds. That mean, for $K$ satisfying inequality $(|K|+1)\upuparrows{\epsilon}$\footnote[2]{Knuth's up-arrow notation}$ > n$, for some $\epsilon \in O(1)$, our algorithm requires constant time of communication rounds.

\section{Summary}
In this paper we presented a deterministic, \TO multiple selection algorithm using regular sampling technique. Size of sample seems to be as large as possible, without violating time restriction, which suggest, that for better result we need use some additional techniques. We still do not know whether there are deterministic, optimal time selection algorithms requiring constant number of communication rounds. 

\par As mentioned ad the beginning, presented algorithm works also in slightly different models. Congested clique algorithms may be easily implemented in $\mathit{BSP}$. As for MapReduce, we could use result presented by James Hegeman and Sriram V. Pemmaraju \cite{ Hegeman2014}, and get some robust algorthm, as solution presented in this paper is communication and memory efficient. 
\par Actually algorithm in those other models works even if we change assumption, that $n \in \Theta(p^2)$ into $p \in \Theta(n^{1-\epsilon})$ for some constant $\epsilon$. However, this requires some additional effort while restoring the invariant concerning the equal distribution of data among machines.

\bibliographystyle{unsrtnat} 

\bibliography{Bibliography} 

\end{document}